\providecommand{\U}[1]{\protect\rule{.1in}{.1in}}
\providecommand{\U}[1]{\protect\rule{.1in}{.1in}}
\theoremstyle{plain}
\newtheorem{lemma}{Lemma}
\newtheorem{remark}{Remark}
\numberwithin{equation}{section}
\begin{document}
\title[Exact Wave Functions]{Exact Wave Functions for Generalized Harmonic
Oscillators}
\author{Nathan Lanfear}
\address{School of Mathematical and Statistical Sciences, Arizona State
University, Tempe, AZ 85287--1804, U.S.A.}
\email{nlanfear@asu.edu}
\author{Raquel M. L\'opez}
\address{Mathematical, Computational and Modeling Sciences Center, Arizona
State University, Tempe, AZ 85287--1904, U.S.A.}
\email{rlopez14@asu.edu}
\author{Sergei K. Suslov}
\address{School of Mathematical and Statistical Sciences \& Mathematical,
Computational and Modeling Sciences Center, Arizona State University, Tempe,
AZ 85287--1804, U.S.A.}
\email{sks@asu.edu}
\urladdr{http://hahn.la.asu.edu/\symbol{126}suslov/index.html}
\date{\today }
\subjclass{Primary 81Q05, 35C05. Secondary 42A38}
\keywords{The time-dependent Schr\"{o}dinger equation, generalized harmonic
oscillators, Green's function, propagator, Ermakov-type invariant and
Pinney-type solution, Ehrenfest theorem, Arnold transformation.}

\begin{abstract}
We transform the time-dependent Schr\"{o}dinger equation for the most
general variable quadratic Hamiltonians into a standard autonomous form. As
a result, the time-evolution of exact wave functions of generalized harmonic
oscillators is determined in terms of solutions of certain Ermakov and
Riccati-type systems. In addition, we show that the classical Arnold
transformation is naturally connected with Ehrenfest's theorem for
generalized harmonic oscillators.
\end{abstract}

\maketitle

\section{Introduction}

Quantum systems with variable quadratic Hamiltonians are called the
generalized harmonic oscillators (see \cite{Berry85}, \cite%
{Cor-Sot:Sua:SusInv}, \cite{Dod:Mal:Man75}, \cite{Dodonov:Man'koFIAN87}, 
\cite{Faddeyev69}, \cite{Fey:Hib}, \cite{Hannay85}, \cite{Leach90}, \cite%
{Lo93}, \cite{Malkin:Man'ko79}, \cite{Menouar:Maamache:Choi10}, \cite{Wolf81}%
, \cite{Yeon:Lee:Um:George:Pandey93}, \cite{Zhukov99} and references
therein). These systems have attracted substantial attention over the years
because of their great importance in many advanced quantum problems.
Examples are coherent states and uncertainty relations, Berry's phase,
quantization of mechanical systems and Hamiltonian cosmology. More
applications include, but are not limited to charged particle traps and
motion in uniform magnetic fields, molecular spectroscopy and polyatomic
molecules in varying external fields, crystals through which an electron is
passing and exciting the oscillator modes, and other mode interactions with
external fields. Quadratic Hamiltonians have particular applications in
quantum electrodynamics because the electromagnetic field can be represented
as a set of forced harmonic oscillators \cite{Fey:Hib}.

A goal of this Letter is to construct exact wave functions for generalized
(driven) harmonic oscillators \cite{Berry85}, \cite{Cor-Sot:Lop:Sua:Sus}, 
\cite{Hannay85}, \cite{Leach90}, \cite{Lo93}, \cite{Wolf81}, \cite%
{Yeon:Lee:Um:George:Pandey93}, in terms of Hermite polynomials by
transforming the time-dependent Schr\"{o}dinger equation into an autonomous
form \cite{Zhukov99}. The relationship with certain Ermakov and Riccati-type
systems, which seem are missing in the available literature in general, are
investigated. A group theoretical approach to a similar class of partial
differential equations is discussed in Refs.~\cite{AblowClark91}, \cite%
{Clark88}, \cite{Craddock09}, \cite{GagWint93}, \cite{Miller77}, \cite%
{Rosen76} (see also \cite{SuazoSusVega10}, \cite{SuazoSusVega11} and
references therein). Some applications to the nonlinear Schr\"{o}dinger
equation can also be found in Refs.~\cite{Dal:Giorg:Pitaevski:Str99}, \cite%
{Kagan:Surkov:Shlyap96}, \cite{Kagan:Surkov:Shlyap97}, \cite%
{Kivsh:Alex:Tur01}, \cite{Kundu09}, \cite{Per-G:Tor:Mont}, \cite%
{SuazoSuslovSol} and \cite{Suslov11}.

\section{Transforming Generalized Harmonic Oscillators into Autonomous Form}

We consider the one-dimensional time-dependent Schr\"{o}dinger equation%
\begin{equation}
i\frac{\partial \psi }{\partial t}=H\psi ,  \label{Schroedinger}
\end{equation}%
where the variable Hamiltonian $H=Q\left( p,x\right) $ is an arbitrary
quadratic of two operators $p=-i\partial /\partial x$ and $x,$ namely,%
\begin{equation}
i\psi _{t}=-a\left( t\right) \psi _{xx}+b\left( t\right) x^{2}\psi -ic\left(
t\right) x\psi _{x}-id\left( t\right) \psi -f\left( t\right) x\psi +ig\left(
t\right) \psi _{x},  \label{SchroedingerQuadratic}
\end{equation}%
($a,$ $b,$ $c,$ $d,$ $f$ and $g$ are suitable real-valued functions of time
only). We shall refer to these quantum systems as the \textit{generalized
(driven) harmonic oscillators}. Some examples, a general approach and known
elementary solutions can be found in Refs.~\cite{Cor-Sot:Lop:Sua:Sus}, \cite%
{Cor-Sot:Sua:Sus}, \cite{Cor-Sot:Sua:SusInv}, \cite{Cor-Sot:Sus}, \cite%
{Dod:Mal:Man75}, \cite{FeynmanPhD}, \cite{Feynman}, \cite{Fey:Hib}, \cite%
{Lo93}, \cite{Lop:Sus}, \cite{Me:Co:Su}, \cite{Suaz:Sus}, \cite{Wolf81} and 
\cite{Yeon:Lee:Um:George:Pandey93}. In addition, a case related to Airy
functions is discussed in \cite{Lan:Sus} and Ref.~\cite{Cor-Sot:SusDPO}
deals with another special case of transcendental solutions.

The following is our first result.

\begin{lemma}
The substitution%
\begin{equation}
\psi =\frac{e^{i\left( \alpha \left( t\right) x^{2}+\delta \left( t\right)
x+\kappa \left( t\right) \right) }}{\sqrt{\mu \left( t\right) }}\ \chi
\left( \xi ,\tau \right) ,\qquad \xi =\beta \left( t\right) x+\varepsilon
\left( t\right) ,\quad \tau =\gamma \left( t\right)  \label{Ansatz}
\end{equation}%
transforms the non-autonomous and inhomogeneous Schr\"{o}dinger equation (%
\ref{SchroedingerQuadratic}) into the autonomous form%
\begin{equation}
-i\chi _{\tau }=-\chi _{\xi \xi }+c_{0}\xi ^{2}\chi \qquad \left(
c_{0}=0,1\right)  \label{ASEq}
\end{equation}%
provided that%
\begin{equation}
\frac{d\alpha }{dt}+b+2c\alpha +4a\alpha ^{2}=c_{0}a\beta ^{4},  \label{SysA}
\end{equation}%
\begin{equation}
\frac{d\beta }{dt}+\left( c+4a\alpha \right) \beta =0,  \label{SysB}
\end{equation}%
\begin{equation}
\frac{d\gamma }{dt}+a\beta ^{2}=0  \label{SysC}
\end{equation}%
and%
\begin{equation}
\frac{d\delta }{dt}+\left( c+4a\alpha \right) \delta =f+2g\alpha
+2c_{0}a\beta ^{3}\varepsilon ,  \label{SysD}
\end{equation}%
\begin{equation}
\frac{d\varepsilon }{dt}=\left( g-2a\delta \right) \beta ,  \label{SysE}
\end{equation}%
\begin{equation}
\frac{d\kappa }{dt}=g\delta -a\delta ^{2}+c_{0}a\beta ^{2}\varepsilon ^{2}.
\label{SysF}
\end{equation}%
Here%
\begin{equation}
\alpha =\frac{1}{4a}\frac{\mu ^{\prime }}{\mu }-\frac{d}{2a}.  \label{Alpha}
\end{equation}
\end{lemma}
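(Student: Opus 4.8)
The plan is to verify the Lemma by direct substitution of the ansatz \eqref{Ansatz} into \eqref{SchroedingerQuadratic}, followed by matching coefficients. Writing the phase as $S=\alpha x^{2}+\delta x+\kappa$, so that $\psi=\mu^{-1/2}e^{iS}\chi$, and recalling that $\chi=\chi(\xi,\tau)$ with $\xi=\beta x+\varepsilon$, $\tau=\gamma$, the first step is to apply the chain rule to obtain
\[
\psi_{x}=\mu^{-1/2}e^{iS}\left[i(2\alpha x+\delta)\chi+\beta\chi_{\xi}\right],
\]
\[
\psi_{xx}=\mu^{-1/2}e^{iS}\left[-(2\alpha x+\delta)^{2}\chi+2i\alpha\chi+2i\beta(2\alpha x+\delta)\chi_{\xi}+\beta^{2}\chi_{\xi\xi}\right],
\]
\[
\psi_{t}=\mu^{-1/2}e^{iS}\left[\left(iS_{t}-\tfrac{1}{2}\,\mu'/\mu\right)\chi+(\beta'x+\varepsilon')\chi_{\xi}+\gamma'\chi_{\tau}\right],
\]
with $S_{t}=\alpha'x^{2}+\delta'x+\kappa'$. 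Inserting these into \eqref{SchroedingerQuadratic} and cancelling the common factor $\mu^{-1/2}e^{iS}$ reduces the problem to an algebraic identity in $\chi,\chi_{\xi},\chi_{\xi\xi},\chi_{\tau}$.

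Next I would group this identity by the differential order of $\chi$ and compare it, term by term, with the target equation \eqref{ASEq} rewritten as $-i\chi_{\tau}+\chi_{\xi\xi}-c_{0}\xi^{2}\chi=0$. The coefficient of $\chi_{\xi\xi}$ is $a\beta^{2}$; taking it as the overall normalization and demanding that the $\chi_{\tau}$ coefficient $i\gamma'$ rescale to $-i$ forces $\gamma'=-a\beta^{2}$, which is \eqref{SysC}. The coefficient of $\chi_{\xi}$ factors as $i\beta$ times the affine polynomial $\beta^{-1}(\beta'x+\varepsilon')+2a(2\alpha x+\delta)+cx-g$, which must vanish identically in $x$; its $x$-coefficient produces \eqref{SysB} and its constant term produces \eqref{SysE}.

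The substantive step is the coefficient of $\chi$, which after normalization must equal $-c_{0}a\beta^{2}\xi^{2}$. Here I would split into real and imaginary parts. The imaginary contributions amount to $2a\alpha+d-\tfrac{1}{2}\,\mu'/\mu$, which have no real counterpart on the right-hand side and so are forced to vanish; solving for $\alpha$ gives precisely \eqref{Alpha}. The surviving real part is a quadratic polynomial in $x$ that I would match against $-c_{0}a\beta^{2}(\beta x+\varepsilon)^{2}$ power by power: the $x^{2}$ terms yield \eqref{SysA}, the $x^{1}$ terms yield \eqref{SysD}, and the $x^{0}$ terms yield \eqref{SysF}.

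I expect the only genuine difficulty to be bookkeeping rather than ideas: one must track the cross terms generated by $\psi_{xx}$ with care and keep the driving contributions proportional to $c_{0}$ correctly distributed among the three powers of $x$ coming from $\xi^{2}=(\beta x+\varepsilon)^{2}$, which is what couples the inhomogeneous relations \eqref{SysD}--\eqref{SysF} to \eqref{SysA}--\eqref{SysC}. The one non-obvious point is that consistency of the $\chi$-coefficient demands that its imaginary part vanish; this is exactly the step that singles out the link \eqref{Alpha} between $\alpha$ and $\mu$, without which the reduction to the real autonomous form \eqref{ASEq} could not occur.
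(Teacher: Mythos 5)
Your proposal is correct and follows essentially the same route as the paper: compute $\psi_t$, $\psi_x$, $\psi_{xx}$ for the ansatz (your formulas agree with the paper's (\ref{Pf1})--(\ref{Pf3})), substitute, and match coefficients of $\chi_{\tau}$, $\chi_{\xi\xi}$, $\chi_{\xi}$ and $\chi$ power by power in $x$, with the vanishing of the imaginary part of the $\chi$-coefficient yielding (\ref{Alpha}). The only cosmetic difference is that the paper pre-distributes the $c_0\xi^2$ contribution by rewriting the equation as (\ref{Pf4}), whereas you expand $c_0\xi^{2}=c_0(\beta x+\varepsilon)^{2}$ at the matching stage; these are the same bookkeeping.
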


\begin{proof}
Differentiating $\psi =\mu ^{-1/2}\left( t\right) e^{iS\left( x,t\right)
}\chi \left( \xi ,\tau \right) $ with $S=\alpha \left( t\right) x^{2}+\delta
\left( t\right) x+\kappa \left( t\right) ,$ $\xi =\beta \left( t\right)
x+\varepsilon \left( t\right) $ and $\tau =\gamma \left( t\right) $ yields%
\begin{equation}
ie^{-iS}\psi _{t}=\frac{1}{\sqrt{\mu }}\left[ -\left( \alpha ^{\prime
}x^{2}+\delta ^{\prime }x+\kappa ^{\prime }\right) \chi +i\left( \left(
\beta ^{\prime }x+\varepsilon ^{\prime }\right) \chi _{\xi }+\gamma ^{\prime
}\chi _{\tau }-\frac{\mu ^{\prime }}{2\mu }\chi \right) \right] ,
\label{Pf1}
\end{equation}%
\begin{equation}
e^{-iS}\psi _{x}=\frac{1}{\sqrt{\mu }}\left[ i\left( 2\alpha x+\delta
\right) \chi +\beta \chi _{\xi }\right]  \label{Pf2}
\end{equation}%
and%
\begin{equation}
e^{-iS}\psi _{xx}=\frac{1}{\sqrt{\mu }}\left[ \left( 2i\alpha -\left(
2\alpha x+\delta \right) ^{2}\right) \chi +2i\left( 2\alpha x+\delta \right)
\beta \chi _{\xi }+\beta ^{2}\chi _{\xi \xi }\right] .  \label{Pf3}
\end{equation}%
Substituting into%
\begin{eqnarray}
i\psi _{t} &=&-a\psi _{xx}+\left( b-c_{0}a\beta ^{4}\right) x^{2}\psi
-icx\psi _{x}-id\psi  \label{Pf4} \\
&&-\left( f+2c_{0}a\beta ^{3}\varepsilon \right) x\psi +ig\psi
_{x}-c_{0}a\beta ^{2}\varepsilon ^{2}\psi +c_{0}a\beta ^{2}\varepsilon
^{2}\xi ^{2}\psi  \notag
\end{eqnarray}%
and using system (\ref{SysA})--(\ref{SysF}), results in Eq.~(\ref{ASEq}).
Further computational details are left to the reader.
\end{proof}

Our transformation (\ref{Ansatz}) provides a new interpretation to system (%
\ref{SysA})--(\ref{SysF}) originally derived in Ref.~\cite%
{Cor-Sot:Lop:Sua:Sus} when $c_{0}=0$ by integrating the corresponding Schr%
\"{o}dinger equation via the Green function method (see also \cite{Suslov10}
for an eigenfunction expansion). Here, we discuss the case $c_{0}\neq 0$ as
its natural extension.

The substitution (\ref{Alpha}), which has been already used in \cite%
{Cor-Sot:Lop:Sua:Sus}, appears here from a new \textquotedblleft
transformation perspective\textquotedblright . It now reduces the
inhomogeneous equation (\ref{SysA}) to the second order ordinary
differential equation%
\begin{equation}
\mu ^{\prime \prime }-\tau \left( t\right) \mu ^{\prime }+4\sigma \left(
t\right) \mu =c_{0}\left( 2a\right) ^{2}\beta ^{4}\mu ,  \label{CharEq}
\end{equation}%
that has the familiar time-varying coefficients%
\begin{equation}
\tau \left( t\right) =\frac{a^{\prime }}{a}-2c+4d,\qquad \sigma \left(
t\right) =ab-cd+d^{2}+\frac{d}{2}\left( \frac{a^{\prime }}{a}-\frac{%
d^{\prime }}{d}\right) .  \label{TauSigma}
\end{equation}%
(The reader should be convinced that this derivation is rather
straightforward.)

When $c_{0}=0$, equation (\ref{SysA}) is called the \textit{Riccati
nonlinear differential equation} \cite{Wa}, \cite{Wh:Wa}; consequently, the
system (\ref{SysA})--(\ref{SysF}) shall be referred to as a \textit{%
Riccati-type system}. (Similar terminology is used in \cite{SuazoSusVega11}
for the corresponding parabolic equation.) Now if $c_{0}=1,$\ equation (\ref%
{CharEq}) can be reduced to a generalized version of the \textit{Ermakov
nonlinear differential equation} (\ref{Ermakov}) (see, for example, \cite%
{Cor-Sot:Sua:SusInv}, \cite{Ermakov}, \cite{Leach:Andrio08}, \cite{Suslov10}
and references therein regarding Ermakov's equation) and we shall refer to
the corresponding system (\ref{SysA})--(\ref{SysF}) with $c_{0}\neq 0$ as an 
\textit{Ermakov-type system}.

\section{Green's Function and Wavefunctions}

Two particular solutions of the time-dependent Schr\"{o}dinger equation (\ref%
{SchroedingerQuadratic}) are useful in physical applications. Using standard
oscillator wave functions for equation (\ref{ASEq}) when $c_{0}=1$ (for
example, \cite{Flu}, \cite{La:Lif} and/or \cite{Merz}) results in the
solution%
\begin{equation}
\psi _{n}\left( x,t\right) =\frac{e^{i\left( \alpha x^{2}+\delta x+\kappa
\right) +i\left( 2n+1\right) \gamma }}{\sqrt{2^{n}n!\mu \sqrt{\pi }}}\
e^{-\left( \beta x+\varepsilon \right) ^{2}/2}\ H_{n}\left( \beta
x+\varepsilon \right) ,  \label{WaveFunction}
\end{equation}%
where $H_{n}\left( x\right) $ are the Hermite polynomials \cite{Ni:Su:Uv},
provided that the solution of the Ermakov-type system (\ref{SysA})--(\ref%
{SysF}) is available.

The Green function of generalized harmonic oscillators has been constructed
in the following fashion in Ref.~\cite{Cor-Sot:Lop:Sua:Sus}:%
\begin{equation}
G\left( x,y,t\right) =\frac{1}{\sqrt{2\pi i\mu _{0}\left( t\right) }}\exp %
\left[ i\left( \alpha _{0}\left( t\right) x^{2}+\beta _{0}\left( t\right)
xy+\gamma _{0}\left( t\right) y^{2}+\delta _{0}\left( t\right) x+\varepsilon
_{0}\left( t\right) y+\kappa _{0}\left( t\right) \right) \right] .
\label{GreenFunction}
\end{equation}%
The time-dependent coefficients $\alpha _{0},$ $\beta _{0},$ $\gamma _{0},$ $%
\delta _{0},$ $\varepsilon _{0},$ $\kappa _{0}$ satisfy the Riccati-type
system (\ref{SysA})--(\ref{SysF}) $(c_{0}=0)$ and are given as follows \cite%
{Cor-Sot:Lop:Sua:Sus}, \cite{Suaz:Sus}, \cite{Suslov10}:%
\begin{eqnarray}
&&\alpha _{0}\left( t\right) =\frac{1}{4a\left( t\right) }\frac{\mu
_{0}^{\prime }\left( t\right) }{\mu _{0}\left( t\right) }-\frac{d\left(
t\right) }{2a\left( t\right) },  \label{A0} \\
&&\beta _{0}\left( t\right) =-\frac{\lambda \left( t\right) }{\mu _{0}\left(
t\right) },\qquad \lambda \left( t\right) =\exp \left( -\int_{0}^{t}\left(
c\left( s\right) -2d\left( s\right) \right) \ ds\right) ,  \label{B0} \\
&&\gamma _{0}\left( t\right) =\frac{1}{2\mu _{1}\left( 0\right) }\frac{\mu
_{1}\left( t\right) }{\mu _{0}\left( t\right) }+\frac{d\left( 0\right) }{%
2a\left( 0\right) }  \label{C0}
\end{eqnarray}%
and%
\begin{equation}
\delta _{0}\left( t\right) =\frac{\lambda \left( t\right) }{\mu _{0}\left(
t\right) }\int_{0}^{t}\left[ \left( f\left( s\right) -\frac{d\left( s\right) 
}{a\left( s\right) }g\left( s\right) \right) \mu _{0}\left( s\right) +\frac{%
g\left( s\right) }{2a\left( s\right) }\mu _{0}^{\prime }\left( s\right) %
\right] \frac{ds}{\lambda \left( s\right) },  \label{D0}
\end{equation}%
\begin{eqnarray}
\varepsilon _{0}\left( t\right) &=&-\frac{2a\left( t\right) \lambda \left(
t\right) }{\mu _{0}^{\prime }\left( t\right) }\delta _{0}\left( t\right)
+8\int_{0}^{t}\frac{a\left( s\right) \sigma \left( s\right) \lambda \left(
s\right) }{\left( \mu _{0}^{\prime }\left( s\right) \right) ^{2}}\left( \mu
_{0}\left( s\right) \delta _{0}\left( s\right) \right) \ ds  \label{E0} \\
&&\quad +2\int_{0}^{t}\frac{a\left( s\right) \lambda \left( s\right) }{\mu
_{0}^{\prime }\left( s\right) }\left( f\left( s\right) -\frac{d\left(
s\right) }{a\left( s\right) }g\left( s\right) \right) \ ds,  \notag
\end{eqnarray}%
\begin{eqnarray}
\kappa _{0}\left( t\right) &=&\frac{a\left( t\right) \mu _{0}\left( t\right) 
}{\mu _{0}^{\prime }\left( t\right) }\delta _{0}^{2}\left( t\right)
-4\int_{0}^{t}\frac{a\left( s\right) \sigma \left( s\right) }{\left( \mu
_{0}^{\prime }\left( s\right) \right) ^{2}}\left( \mu _{0}\left( s\right)
\delta _{0}\left( s\right) \right) ^{2}\ ds  \label{F0} \\
&&\quad -2\int_{0}^{t}\frac{a\left( s\right) }{\mu _{0}^{\prime }\left(
s\right) }\left( \mu _{0}\left( s\right) \delta _{0}\left( s\right) \right)
\left( f\left( s\right) -\frac{d\left( s\right) }{a\left( s\right) }g\left(
s\right) \right) \ ds  \notag
\end{eqnarray}%
$(\delta _{0}\left( 0\right) =-\varepsilon _{0}\left( 0\right) =g\left(
0\right) /\left( 2a\left( 0\right) \right) $ and $\kappa _{0}\left( 0\right)
=0)$ provided that $\mu _{0}$ and $\mu _{1}$ are standard solutions of
equation (\ref{CharEq}) with $c_{0}=0$ corresponding to the initial
conditions $\mu _{0}\left( 0\right) =0,$ $\mu _{0}^{\prime }\left( 0\right)
=2a\left( 0\right) \neq 0$ and $\mu _{1}\left( 0\right) \neq 0,$ $\mu
_{1}^{\prime }\left( 0\right) =0.$ (Proofs of these facts are outlined in
Refs.~\cite{Cor-Sot:Lop:Sua:Sus}, \cite{Cor-Sot:SusDPO} and \cite{Suaz:Sus}.
See also important previous works \cite{Dodonov:Man'koFIAN87}, \cite%
{Malkin:Man'ko79}, \cite{Wolf81}, \cite{Yeon:Lee:Um:George:Pandey93}, \cite%
{Zhukov99} and references therein for more details.)

Hence, the corresponding Cauchy initial value problem can be solved
(formally) by the superposition principle:%
\begin{equation}
\psi \left( x,t\right) =\int_{-\infty }^{\infty }G\left( x,y,t\right) \psi
\left( y,0\right) \ dy  \label{Superposition}
\end{equation}%
for some suitable initial data $\psi \left( x,0\right) =\varphi \left(
x\right) $ (see Refs.~\cite{Cor-Sot:Lop:Sua:Sus}, \cite{Suaz:Sus} and \cite%
{Suslov10} for further details).

In particular, using the wave functions (\ref{WaveFunction}) we get the
integral%
\begin{equation}
\psi _{n}\left( x,t\right) =\int_{-\infty }^{\infty }G\left( x,y,t\right)
\psi _{n}\left( y,0\right) \ dy,  \label{SuperWF}
\end{equation}%
and this can be evaluated by%
\begin{eqnarray}
&&\int_{-\infty }^{\infty }e^{-\lambda ^{2}\left( x-y\right)
^{2}}H_{n}\left( ay\right) \ dy  \label{Erd} \\
&&\quad =\frac{\sqrt{\pi }}{\lambda ^{n+1}}\left( \lambda ^{2}-a^{2}\right)
^{n/2}H_{n}\left( \frac{\lambda ax}{\left( \lambda ^{2}-a^{2}\right) ^{1/2}}%
\right) ,\quad \func{Re}\lambda ^{2}>0,  \notag
\end{eqnarray}%
which is an integral transform equivalent to Eq.~(30) on page 195 of Vol.~2
of Ref.~\cite{Erd} (the Gauss transform of Hermite polynomials), or Eq.~(17)
on page 290 of Vol.~2 of Ref.~\cite{ErdInt}.

\section{Solution to Ermakov-type System}

As shown in the previous section, the time evolution of the wave functions (%
\ref{WaveFunction}) is determined in terms of the solution to the initial
value problem for the Ermakov-type system. In this section, formulas (\ref%
{WaveFunction})--(\ref{GreenFunction}) and (\ref{SuperWF})--(\ref{Erd})
shall be used in order to solve the general system (\ref{SysA})--(\ref{SysF}%
) when $c_{0}\neq 0$ along with the uniqueness property of the Cauchy
initial value problem. At this point, we must remind the reader how to
handle the special case $c_{0}=0$ considered in \cite{Suaz:Sus}.

\begin{lemma}
The solution of the Riccati-type system (\ref{SysA})--(\ref{SysF}) $%
(c_{0}=0) $ is given by%
\begin{eqnarray}
&&\mu \left( t\right) =2\mu \left( 0\right) \mu _{0}\left( t\right) \left(
\alpha \left( 0\right) +\gamma _{0}\left( t\right) \right) ,  \label{MKernel}
\\
&&\alpha \left( t\right) =\alpha _{0}\left( t\right) -\frac{\beta
_{0}^{2}\left( t\right) }{4\left( \alpha \left( 0\right) +\gamma _{0}\left(
t\right) \right) },  \label{AKernel} \\
&&\beta \left( t\right) =-\frac{\beta \left( 0\right) \beta _{0}\left(
t\right) }{2\left( \alpha \left( 0\right) +\gamma _{0}\left( t\right)
\right) }=\frac{\beta \left( 0\right) \mu \left( 0\right) }{\mu \left(
t\right) }\lambda \left( t\right) ,  \label{BKernel} \\
&&\gamma \left( t\right) =\gamma \left( 0\right) -\frac{\beta ^{2}\left(
0\right) }{4\left( \alpha \left( 0\right) +\gamma _{0}\left( t\right)
\right) }  \label{CKernel}
\end{eqnarray}%
and%
\begin{eqnarray}
\delta \left( t\right) &=&\delta _{0}\left( t\right) -\frac{\beta _{0}\left(
t\right) \left( \delta \left( 0\right) +\varepsilon _{0}\left( t\right)
\right) }{2\left( \alpha \left( 0\right) +\gamma _{0}\left( t\right) \right) 
},  \label{DKernel} \\
\varepsilon \left( t\right) &=&\varepsilon \left( 0\right) -\frac{\beta
\left( 0\right) \left( \delta \left( 0\right) +\varepsilon _{0}\left(
t\right) \right) }{2\left( \alpha \left( 0\right) +\gamma _{0}\left(
t\right) \right) },  \label{EKernel} \\
\kappa \left( t\right) &=&\kappa \left( 0\right) +\kappa _{0}\left( t\right)
-\frac{\left( \delta \left( 0\right) +\varepsilon _{0}\left( t\right)
\right) ^{2}}{4\left( \alpha \left( 0\right) +\gamma _{0}\left( t\right)
\right) }  \label{FKernel}
\end{eqnarray}%
in terms of the fundamental solution (\ref{A0})--(\ref{F0}) subject to the
arbitrary initial data $\mu \left( 0\right) ,$ $\alpha \left( 0\right) ,$ $%
\beta \left( 0\right) \neq 0,$ $\gamma \left( 0\right) ,$ $\delta \left(
0\right) ,$ $\varepsilon \left( 0\right) ,$ $\kappa \left( 0\right)$.
\end{lemma}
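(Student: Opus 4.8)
The plan is to prove the lemma by direct verification together with the uniqueness theorem for the Cauchy problem, while noting that the same formulas arise constructively from the Green-function superposition \eqref{Superposition}. Concretely, I would substitute the proposed expressions \eqref{MKernel}--\eqref{FKernel} into the Riccati-type system \eqref{SysA}--\eqref{SysF} with $c_{0}=0$, using that the fundamental coefficients \eqref{A0}--\eqref{F0} already satisfy that same system (with their own singular initial data). Since every formula is built on the single denominator $D(t):=\alpha(0)+\gamma_{0}(t)$, the whole computation can be organized around the one identity $D'=\gamma_{0}'=-a\beta_{0}^{2}$, which is just \eqref{SysC} for the fundamental solution.

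With this device the verification is essentially mechanical. For instance, writing $\beta=-\beta(0)\beta_{0}/(2D)$ and differentiating gives $\beta'=\tfrac{\beta(0)}{2}\bigl[(c+4a\alpha_{0})\beta_{0}/D-a\beta_{0}^{3}/D^{2}\bigr]$ after inserting $\beta_{0}'=-(c+4a\alpha_{0})\beta_{0}$ and $D'=-a\beta_{0}^{2}$; on the other hand $c+4a\alpha=c+4a\alpha_{0}-a\beta_{0}^{2}/D$ by \eqref{AKernel}, so $(c+4a\alpha)\beta$ produces exactly the two opposite terms and \eqref{SysB} holds identically. The remaining equations are confirmed the same way: after differentiation and substitution of the fundamental system the nonlinear pieces telescope against the corrections $-\beta_{0}^{2}/(4D)$, $-\beta_{0}(\delta(0)+\varepsilon_{0})/(2D)$ and $-(\delta(0)+\varepsilon_{0})^{2}/(4D)$, so no genuinely new identity is needed. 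I would display one or two representative cases and leave the rest as routine.

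The delicate point — and the step I expect to be the main obstacle — is matching the initial data. Because $\mu_{0}(0)=0$ while $\mu_{0}'(0)=2a(0)\neq 0$, the fundamental coefficients all blow up like $1/(\mathrm{const}\cdot t)$ as $t\to 0$ (indeed $\alpha_{0}\sim\gamma_{0}\sim 1/(4a(0)t)$ and $\beta_{0}\sim -1/(2a(0)t)$), so each of \eqref{AKernel}--\eqref{FKernel} is an indeterminate difference of singular terms at $t=0$. The formulas are arranged precisely so that these leading singularities cancel: one must expand $\mu_{0}$, $\mu_{1}$ and $\lambda$ to next order near $t=0$ and check that $\alpha(t)\to\alpha(0)$, $\beta(t)\to\beta(0)$, $\gamma(t)\to\gamma(0)$ and likewise for $\delta,\varepsilon,\kappa,\mu$ (the limit $\mu_{0}(t)D(t)\to 1/2$, whence $\mu(0)=2\mu(0)\cdot\tfrac12$, is the cleanest of these). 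This asymptotic bookkeeping is where the real care lies.

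Finally, having shown that \eqref{MKernel}--\eqref{FKernel} solve \eqref{SysA}--\eqref{SysF} and take the prescribed values at $t=0$, uniqueness for the (locally Lipschitz) Cauchy problem forces them to be \emph{the} solution, completing the proof. As an alternative that bypasses the initial-data asymptotics, I would feed ansatz-type Gaussian data \eqref{Ansatz} at $t=0$ into the superposition integral \eqref{Superposition} with kernel \eqref{GreenFunction}, evaluate the resulting Gaussian integral in $y$ by completing the square (the Hermite instance being the Gauss transform \eqref{Erd}), and read off \eqref{MKernel}--\eqref{FKernel} by matching the coefficients of $x^{2}$, $x$, $1$, the argument $\xi=\beta x+\varepsilon$, the advanced time $\tau=\gamma$, and the normalization $\mu$; here the initial conditions hold automatically because $G(x,y,t)\to\delta(x-y)$ as $t\to 0^{+}$.
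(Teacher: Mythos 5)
Your proposal is correct and follows essentially the same route the paper indicates: direct substitution into the Riccati-type system (organized around the identity $\gamma_0'=-a\beta_0^2$ for the fundamental solution), combined with the alternative Gaussian-integral evaluation via the superposition principle \eqref{Superposition}, and the initial-data matching via the small-$t$ asymptotics \eqref{AssA0}--\eqref{AssE0}. The paper itself only states ``this solution can be verified by a direct substitution and/or by an integral evaluation,'' so your write-up simply supplies the details it leaves to the reader.
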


This solution can be verified by a direct substitution and/or by an integral
evaluation. This result can also be thought of as a nonlinear superposition
principle for the Riccati-type system and the continuity with respect to
initial data holds \cite{Suaz:Sus}.

Hence, the solution (\ref{MKernel})--(\ref{FKernel}) implies the following
asymptotics established in \cite{Suaz:Sus}:%
\begin{eqnarray}
&&\alpha _{0}\left( t\right) =\frac{1}{4a\left( 0\right) t}-\frac{c\left(
0\right) }{4a\left( 0\right) }-\frac{a^{\prime }\left( 0\right) }{%
8a^{2}\left( 0\right) }+\mathcal{O}\left( t\right) ,  \label{AssA0} \\
&&\beta _{0}\left( t\right) =-\frac{1}{2a\left( 0\right) t}+\frac{a^{\prime
}\left( 0\right) }{4a^{2}\left( 0\right) }+\mathcal{O}\left( t\right) ,
\label{AssB0} \\
&&\gamma _{0}\left( t\right) =\frac{1}{4a\left( 0\right) t}+\frac{c\left(
0\right) }{4a\left( 0\right) }-\frac{a^{\prime }\left( 0\right) }{%
8a^{2}\left( 0\right) }+\mathcal{O}\left( t\right) ,  \label{AssC0} \\
&&\delta _{0}\left( t\right) =\frac{g\left( 0\right) }{2a\left( 0\right) }+%
\mathcal{O}\left( t\right) ,\quad \varepsilon _{0}\left( t\right) =-\frac{%
g\left( 0\right) }{2a\left( 0\right) }+\mathcal{O}\left( t\right) ,
\label{AssD0} \\
&&\kappa _{0}\left( t\right) =\mathcal{O}\left( t\right)  \label{AssE0}
\end{eqnarray}%
as $t\rightarrow 0$ for sufficiently smooth coefficients of the original Schr%
\"{o}dinger equation (\ref{SchroedingerQuadratic}). Therefore,%
\begin{eqnarray}
G\left( x,y,t\right) &\sim &\frac{1}{\sqrt{2\pi ia\left( 0\right) t}}\exp %
\left[ i\frac{\left( x-y\right) ^{2}}{4a\left( 0\right) t}\right]
\label{GreenAsymp} \\
&&\times \exp \left[ -i\left( \frac{a^{\prime }\left( 0\right) }{%
8a^{2}\left( 0\right) }\left( x-y\right) ^{2}+\frac{c\left( 0\right) }{%
4a\left( 0\right) }\left( x^{2}-y^{2}\right) -\frac{g\left( 0\right) }{%
2a\left( 0\right) }\left( x-y\right) \right) \right]  \notag
\end{eqnarray}%
as $t\rightarrow 0$ (where $f\sim g$ as $t\rightarrow 0,$ if $%
\lim_{t\rightarrow 0}\left( f/g\right) =$ $1$). This corrects an errata in
Ref.~\cite{Cor-Sot:Lop:Sua:Sus} .

Finally, we present the extension to a general case when $c_{0}\neq 0.$ Our
main result is the following.

\begin{lemma}
The solution of the Ermakov-type system (\ref{SysA})--(\ref{SysF}) when $%
c_{0}=1\left( \neq 0\right) $ is given by%
\begin{eqnarray}
&&\mu =\mu \left( 0\right) \mu _{0}\sqrt{\beta ^{4}\left( 0\right) +4\left(
\alpha \left( 0\right) +\gamma _{0}\right) ^{2}},  \label{MKernelOsc} \\
&&\alpha =\alpha _{0}-\beta _{0}^{2}\frac{\alpha \left( 0\right) +\gamma _{0}%
}{\beta ^{4}\left( 0\right) +4\left( \alpha \left( 0\right) +\gamma
_{0}\right) ^{2}},  \label{AKernelOsc} \\
&&\beta =-\frac{\beta \left( 0\right) \beta _{0}}{\sqrt{\beta ^{4}\left(
0\right) +4\left( \alpha \left( 0\right) +\gamma _{0}\right) ^{2}}}=\frac{%
\beta \left( 0\right) \mu \left( 0\right) }{\mu \left( t\right) }\lambda
\left( t\right) ,  \label{BKernelOsc} \\
&&\gamma =\gamma \left( 0\right) -\frac{1}{2}\arctan \frac{\beta ^{2}\left(
0\right) }{2\left( \alpha \left( 0\right) +\gamma _{0}\right) },\quad
a\left( 0\right) >0  \label{CKernelOsc}
\end{eqnarray}%
and%
\begin{eqnarray}
\delta &=&\delta _{0}-\beta _{0}\frac{\varepsilon \left( 0\right) \beta
^{3}\left( 0\right) +2\left( \alpha \left( 0\right) +\gamma _{0}\right)
\left( \delta \left( 0\right) +\varepsilon _{0}\right) }{\beta ^{4}\left(
0\right) +4\left( \alpha \left( 0\right) +\gamma _{0}\right) ^{2}},
\label{DKernelOsc} \\
\varepsilon &=&\frac{2\varepsilon \left( 0\right) \left( \alpha \left(
0\right) +\gamma _{0}\right) -\beta \left( 0\right) \left( \delta \left(
0\right) +\varepsilon _{0}\right) }{\sqrt{\beta ^{4}\left( 0\right) +4\left(
\alpha \left( 0\right) +\gamma _{0}\right) ^{2}}},  \label{EKernelOsc} \\
\kappa &=&\kappa \left( 0\right) +\kappa _{0}-\varepsilon \left( 0\right)
\beta ^{3}\left( 0\right) \frac{\delta \left( 0\right) +\varepsilon _{0}}{%
\beta ^{4}\left( 0\right) +4\left( \alpha \left( 0\right) +\gamma
_{0}\right) ^{2}}  \label{FKernelOsc} \\
&&+\left( \alpha \left( 0\right) +\gamma _{0}\right) \frac{\varepsilon
^{2}\left( 0\right) \beta ^{2}\left( 0\right) -\left( \delta \left( 0\right)
+\varepsilon _{0}\right) ^{2}}{\beta ^{4}\left( 0\right) +4\left( \alpha
\left( 0\right) +\gamma _{0}\right) ^{2}}  \notag
\end{eqnarray}%
in terms of the fundamental solution (\ref{A0})--(\ref{F0}) subject to the
arbitrary initial data $\mu \left( 0\right) ,$ $\alpha \left( 0\right) ,$ $%
\beta \left( 0\right) \neq 0,$ $\gamma \left( 0\right) ,$ $\delta \left(
0\right) ,$ $\varepsilon \left( 0\right) ,$ $\kappa \left( 0\right) .$
\end{lemma}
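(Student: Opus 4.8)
The plan is to obtain the $c_{0}=1$ solution by propagating the associated wave function with the already-known $c_{0}=0$ Green function and then reading off the coefficients. By Lemma~1, for arbitrary initial data $\mu(0),\alpha(0),\beta(0)\neq0,\gamma(0),\delta(0),\varepsilon(0),\kappa(0)$ the function $\psi_{n}(x,t)$ in (\ref{WaveFunction}), built from the (as yet unknown) solution of the Ermakov-type system, solves the original Schr\"{o}dinger equation (\ref{SchroedingerQuadratic}). Since the propagator (\ref{GreenFunction}), whose coefficients are the explicit fundamental solution (\ref{A0})--(\ref{F0}) of the Riccati-type system, reproduces the initial data as $t\to0$, the superposition formula (\ref{SuperWF}) gives
\begin{equation*}
\psi_{n}(x,t)=\int_{-\infty}^{\infty}G(x,y,t)\,\psi_{n}(y,0)\,dy,
\end{equation*}
where on the right $\psi_{n}(y,0)$ is (\ref{WaveFunction}) evaluated at $t=0$. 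Thus I would compute this integral in closed form and compare the result, term by term, with the left-hand ansatz (\ref{WaveFunction}); the comparison determines $\mu,\alpha,\beta,\gamma,\delta,\varepsilon,\kappa$, and uniqueness of the Cauchy problem guarantees that the functions so obtained are precisely the solution of the Ermakov-type system.

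First I would perform the Gaussian integration. Collecting the $y$-dependence in the exponent of $G(x,y,t)\,\psi_{n}(y,0)$, the coefficient of $y^{2}$ equals $i(\alpha(0)+\gamma_{0})-\tfrac{1}{2}\beta^{2}(0)=-\lambda^{2}$, where
\begin{equation*}
\lambda^{2}=\tfrac{1}{2}\beta^{2}(0)-i(\alpha(0)+\gamma_{0}),\qquad\operatorname{Re}\lambda^{2}>0,
\end{equation*}
while the Hermite factor is $H_{n}(\beta(0)y+\varepsilon(0))$. After completing the square to center the Gaussian at its critical point and shifting the variable accordingly, the integral reduces to the Gauss transform of Hermite polynomials (\ref{Erd}). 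Applying (\ref{Erd}) yields again a Hermite--Gaussian in $x$, that is, an expression of exactly the form (\ref{WaveFunction}), with quadratic exponent in $x$ and affine Hermite argument.

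Next I would match coefficients. The $x^{2}$, $x$ and constant parts of the exponent give $\alpha$, $\delta$ and $\kappa$; the affine argument of $H_{n}$ gives $\beta x+\varepsilon$, hence $\beta$ and $\varepsilon$; and the modulus and phase of the overall prefactor give $\mu$ and $\gamma$. The combination $\beta^{4}(0)+4(\alpha(0)+\gamma_{0})^{2}$ that pervades (\ref{MKernelOsc})--(\ref{FKernelOsc}) is simply $4\,|\lambda^{2}|^{2}$, since $|\lambda^{2}|=\tfrac{1}{2}\sqrt{\beta^{4}(0)+4(\alpha(0)+\gamma_{0})^{2}}$; its square root enters $\mu$ through the factor $\lambda^{-(n+1)}(\lambda^{2}-a^{2})^{n/2}$ of (\ref{Erd}) together with the $\mu_{0}$- and $\mu(0)$-normalizations, while the phase of the prefactor---after combining $\arg\lambda$ with the $e^{-i\pi/4}$ coming from $\sqrt{2\pi i\mu_{0}}$---produces the arctangent in (\ref{CKernelOsc}).

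The main obstacle is precisely the bookkeeping of this complex Gauss--Hermite integral: one must fix the branches of $\lambda$ and $(\lambda^{2}-a^{2})^{1/2}$ consistently with the assumption $a(0)>0$, and verify that the $n$-dependent phase generated by $\lambda^{-(n+1)}(\lambda^{2}-a^{2})^{n/2}$ collapses into a single term $(2n+1)\gamma$ with an $n$-independent $\gamma$; this consistency is exactly what forces the arctangent expression in (\ref{CKernelOsc}). To close the argument I would check, using the small-$t$ asymptotics (\ref{AssA0})--(\ref{AssE0}) (where $\mu_{0}\to0$), that the formulas (\ref{MKernelOsc})--(\ref{FKernelOsc}) return the prescribed initial data as $t\to0$; together with uniqueness this identifies them as the sought solution. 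As an independent verification, all seven formulas may be substituted directly into (\ref{SysA})--(\ref{SysF}) with $c_{0}=1$, exactly as indicated for the $c_{0}=0$ case after Lemma~2.
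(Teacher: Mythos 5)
Your proposal follows exactly the route the paper itself sketches: evaluate the superposition integral (\ref{SuperWF}) with the Gauss transform of Hermite polynomials (\ref{Erd}) after completing the square, match the result against the ansatz (\ref{WaveFunction}), and invoke uniqueness of the Cauchy problem, with direct substitution into (\ref{SysA})--(\ref{SysF}) and the small-$t$ limits (\ref{AssA0})--(\ref{AssE0}) as checks. Your identification of $\lambda^{2}=\tfrac{1}{2}\beta^{2}(0)-i\left(\alpha(0)+\gamma_{0}\right)$ and of $\beta^{4}(0)+4\left(\alpha(0)+\gamma_{0}\right)^{2}$ as $4\left|\lambda^{2}\right|^{2}$ is correct and in fact supplies more of the bookkeeping than the paper's own outline.
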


Following are the steps to the sketch of the proof. Evaluate the integral (%
\ref{SuperWF}) with the help of (\ref{Erd}) by completing the square and
simplify. Use the uniqueness property of the Cauchy initial value problem.
One can also verify our solution by a direct substitution into the system (%
\ref{SysA})--(\ref{SysF}) when $c_{0}=1.$ These elementary but rather
tedious calculations are left to the reader (the use of a computer algebra
system is helpful at certain steps).

Furthermore, the asymptotics (\ref{AssA0})--(\ref{AssE0}) together with our
formulas (\ref{MKernelOsc})--(\ref{FKernelOsc}) result in the continuity
with respect to initial data:%
\begin{equation}
\lim_{t\rightarrow 0^{+}}\mu \left( t\right) =\mu \left( 0\right) ,\quad
\lim_{t\rightarrow 0^{+}}\alpha \left( t\right) =\alpha \left( 0\right)
,\quad \text{etc.}  \label{lims}
\end{equation}%
Thus the transformation property (\ref{MKernelOsc})--(\ref{FKernelOsc})
allows us to find a solution of the initial value problem in terms of the
fundamental solution (\ref{A0})--(\ref{F0}) and it may be referred to as a 
\textit{nonlinear superposition principle} for the Ermakov-type system.

\section{Solution of the Ermakov-type Equation}

Starting from (\ref{CharEq})--(\ref{TauSigma}) when $c_{0}=1,$ and using (%
\ref{BKernelOsc}) we arrive at%
\begin{equation}
\mu ^{\prime \prime }-\tau \left( t\right) \mu ^{\prime }+4\sigma \left(
t\right) \mu =\left( 2a\right) ^{2}\left( \beta \left( 0\right) \mu \left(
0\right) \lambda \right) ^{4}\mu ^{-3},  \label{Ermakov}
\end{equation}%
which is a familiar Ermakov-type equation (see \cite{Car:Luc08}, \cite%
{Cor-Sot:Sua:SusInv}, \cite{Ermakov}, \cite{Leach:Andrio08}, \cite{Suslov10}%
, \cite{Zhukov99} and references therein). Then our formula (\ref{MKernelOsc}%
) leads to the representation%
\begin{equation}
\left( \frac{\mu \left( t\right) }{\mu \left( 0\right) }\right) ^{2}=\beta
^{4}\left( 0\right) \mu _{0}^{2}\left( t\right) +\left( \frac{\mu _{1}\left(
t\right) }{\mu _{1}\left( 0\right) }+\frac{\mu ^{\prime }\left( 0\right) }{%
2\mu \left( 0\right) }\frac{\mu _{0}\left( t\right) }{a\left( 0\right) }%
\right) ^{2}  \label{Pinney}
\end{equation}%
given in terms of standard solutions $\mu _{0}$ and $\mu _{1}$ of the linear
characteristic equation (\ref{CharEq}) when $c_{0}=0.$ Further details on
this Pinney-type solution and the corresponding Ermakov-type invariant are
left to the reader (see also \cite{Car:Luc08} and \cite{Suslov10}).

\section{Ehrenfest Theorem Transformations}

By introducing an expectation value of the coordinate operator in the
following form%
\begin{equation}
\overline{x}=\frac{\left\langle x\right\rangle }{\left\langle 1\right\rangle 
}=\frac{\left\langle \psi ,x\psi \right\rangle }{\left\langle \psi ,\psi
\right\rangle },  \label{ExpX}
\end{equation}%
one can derive Ehrenfest's theorem for the generalized (driven) harmonic
oscillators (see, for example, \cite{Cor-Sot:Sua:Sus} and \cite%
{Cor-Sot:Sua:SusInv}). Then the following classical equation of motion of
the parametric driven oscillator holds%
\begin{equation}
\frac{d^{2}\overline{x}}{dt^{2}}-\frac{a^{\prime }}{a}\frac{d\overline{x}}{dt%
}+\left( 4ab-c^{2}+c\frac{a^{\prime }}{a}-c^{\prime }\right) \overline{x}%
=2af-g^{\prime }+g\frac{a^{\prime }}{a}-cg.  \label{Ehrenfest}
\end{equation}

The transformation of the expectation values%
\begin{equation}
\overline{\xi }=\beta \ \overline{x}+\varepsilon ,\qquad \overline{\xi }%
=\left\langle \chi ,\xi \chi \right\rangle \quad \text{with\quad }%
\left\langle \chi ,\chi \right\rangle =1,  \label{ArnoldTransform}
\end{equation}%
corresponding to our Lemma~1, converts (\ref{Ehrenfest}) into the simplest
equation of motion of the free particle and/or harmonic oscillator:%
\begin{equation}
\frac{d^{2}\overline{\xi }}{d\tau ^{2}}+4c_{0}\overline{\xi }=0\qquad \left(
c_{0}=0,1\right) .  \label{harmonic}
\end{equation}%
(This can be verified by a direct calculation.)

\begin{remark}
An exact transformation of a linear second-order differential equation into
the equation of motion of free particle was discussed by Arnold \cite%
{Arnold98}. An extension of the later to the case of the time-dependent Schr%
\"{o}dinger equation had been considered, for example, in Ref.~\cite%
{Zhukov99} and recently it has been reproduced as the quantum Arnold
transformation in \cite{Ald:Coss:Guerr:Lop-Ru11} and \cite%
{Guerr:Lop:Ald:Coss11} (see also \cite{AblowClark91}, \cite{Clark88}, \cite%
{Craddock09}, \cite{GagWint93}, \cite{Kundu09}, \cite{Miller77}, \cite%
{Rosen76} and \cite{Suslov11} for similar transformations of nonlinear Schr%
\"{o}dinger and other equations of mathematical physics). We elaborate on a
relation of the quantum Arnold transformation for the generalized (driven)
harmonic oscillators with a Riccati-type system when $c_{0}=0$
(transformation to the free particle) and consider an extension of this
transformation (in terms of solutions of the corresponding Ermakov-type
system) to the case $c_{0}=1$ (transformation to the classical harmonic
oscillator \cite{Zhukov99}).
\end{remark}

\section{Conclusion}

In this Letter, we have determined the time evolution of the wave functions
of generalized (driven) harmonic oscillators (\ref{WaveFunction}), known for
their great importance in many advanced quantum problems \cite{Fey:Hib}, in
terms of the solution to the Ermakov-type system (\ref{SysA})--(\ref{SysF})
by means of a variant of the nonlinear superposition principle (\ref%
{MKernelOsc})--(\ref{FKernelOsc}). Moreover, the classical Arnold
transformation is related to Ehrenfest's theorem. Numerous examples, the
corresponding coherent states, dynamic invariants, eigenfunction expansions
and transition amplitudes \cite{Dodonov:Man'koFIAN87}, \cite{Lan:Sus}, \cite%
{Leach90}, \cite{Lo93}, \cite{Malkin:Man'ko79}, \cite{Malk:Man:Trif73}, \cite%
{Suslov10} will be discussed elsewhere.

\noindent \textbf{Acknowledgments.\/} We thank Professor Carlos Castillo-Ch%
\'{a}vez and Professor Vladimir~I.~Man'ko for support, valuable discussions
and encouragement. The authors are indebted to Professor Francisco F.~L\'{o}%
pez-Ruiz for kindly pointing out the papers \cite{Ald:Coss:Guerr:Lop-Ru11}
and \cite{Guerr:Lop:Ald:Coss11} to our attention and for valuable
discussions. We are grateful to the organizers of the $12^{\text{th}}$
ICSSUR (Foz do Igua\c{c}u, Brazil, May 02--06, 2011) for their hospitality
and an opportunity to present the results of this work. This research is
supported in part by the National Science Foundation--Enhancing the
Mathematical Sciences Workforce in the 21st Century (EMSW21), award \#
0838705; the Alfred P. Sloan Foundation--Sloan National Pipeline Program in
the Mathematical and Statistical Sciences, award \# LTR 05/19/09; and the
National Security Agency--Mathematical \& Theoretical Biology
Institute---Research program for Undergraduates; award \# H98230-09-1-0104.

\end{document}